\documentclass[twocolumn,letterpaper]{article}

\usepackage{graphicx}           
\usepackage{amssymb}            
\usepackage{amsthm}
\usepackage{unicode-math}       

\usepackage[svgnames]{xcolor}
\usepackage[colorlinks=true,linkcolor=RoyalBlue,citecolor=Tomato,urlcolor=DarkOrchid]{hyperref}

\usepackage{printlen}
\uselengthunit{in}

\usepackage{authblk}

\makeatletter
\let\c@author\relax
\makeatother

\newtheorem{theorem}{Theorem}

\newtheorem{corollary}{Corollary}

\newtheorem*{remark}{Remark}

\DeclareMathOperator{\Cov}{Cov}
\DeclareMathOperator{\Corr}{Corr}

\usepackage[maxnames=1,backend=biber,sorting=none,giveninits=true,style=numeric-comp,hyperref=true,uniquename=init,date=year]{biblatex}
\DeclareSourcemap{
  \maps[datatype=bibtex]{
    \map{
      \step[fieldsource=doi,final]
      \step[fieldset=url,null]
    }  
  }
}

\usepackage{flushend}

\usepackage{cleveref}
\crefname{figure}{Figure}{Figures}
\crefname{table}{Table}{Tables}
\crefname{equation}{Equation}{Equations}
\crefname{theorem}{Theorem}{Theorems}
\crefname{corollary}{Corollary}{Corollaries}
\crefname{lemma}{Lemma}{Lemmas}

\AtEveryBibitem{\ifentrytype{article}{\clearfield{url}\clearfield{urlyear}}{}}

\DeclareNameAlias{sortname}{family-given}

\title{An intuitive rearranging of the Yates covariance decomposition for probabilistic verification of forecasts with the Brier score}

\author{Bruno Hebling Vieira}
\affil{Methods of Plasticity Research, Department of Psychology, University of Zurich, Zurich, Switzerland \\ Orcid: \href{https://orcid.org/0000-0002-8770-7396}{0000-0002-8770-7396} \\ \href{mailto:bruno.heblingvieira@uzh.ch}{bruno.heblingvieira@uzh.ch}}

\date{}

\begin{document}

\maketitle

\begin{abstract}
  Proper scoring rules are essential for evaluating probabilistic forecasts.
  We propose a simple algebraic rearrangement of the Yates covariance decomposition of the Brier score into three independently non-negative terms: a variance mismatch term, a correlation deficit term, and a calibration-in-the-large term.
  This rearrangement makes the optimality conditions for perfect forecasting transparent: the optimal forecast must simultaneously match the variance of outcomes, achieve perfect positive correlation with outcomes, and match the mean of outcomes.
  Any deviation from these conditions results in a positive contribution to the Brier score.
\end{abstract}

\section{Introduction}

Proper scoring rules are a fundamental concept in the evaluation of probabilistic forecasts~\cite{Gneiting_JAmStatAssoc_2007}.
They provide a principled way to assess the accuracy of predicted probabilities by assigning a numerical score based on the predicted probability and the actual outcome.
A scoring rule is considered proper if it encourages truthful reporting of probabilities, meaning that the expected score is maximized when the predicted probabilities match the true probabilities.

One of the most widely used proper scoring rules is the Brier score, introduced by \textcite{Brier_MonthlyWeatherReview_1950}.
The Brier score measures the mean squared error between predicted probabilities and the actual binary outcomes.
For forecast--outcome pairs, the Brier score is defined as:

\[
  \mathrm{BS} = \operatorname{E}\!\left[(F - Y)^2\right]
\]

where $F$ is the random variable representing the predicted probability of the event, $Y \in \{0, 1\}$ is the random variable representing the binary outcome (1 if the event occurs, 0 otherwise), and $\operatorname{E}[\cdot]$ denotes the expectation over the joint distribution of $(F, Y)$.
The Brier score ranges from 0 to 1, with lower values indicating better predictive accuracy.

Decompositions of the Brier score are useful because they provide insights into different aspects of forecast performance.
By breaking down the Brier score into components, we can better understand the sources of forecast errors and identify areas for improvement.

In this note, we will adopt the nomenclature of \textcite{Mitchell_UniversityofExeter_2019}.

Multiple decompositions of the Brier score have been proposed in the literature~\cite{Sanders_JApplMeteorolClimatol_1963,Murphy_JApplMeteorolClimatol_1973,Yates_OrganizationalBehaviorandHumanPerformance_1982,Murphy_MonWeatherRev_1987}.

\subsection{Uncertainty, Resolution \& Reliability}

First, the Brier Score was decomposed into Sharpness \& Reliability components by \textcite{Sanders_JApplMeteorolClimatol_1963} by conditioning on the predicted probabilities $F$.

\begin{align*}
  \mathrm{BS} &= \operatorname{E}\!\left[\left(F - Y\right)^2\right] \\
  &= \underbrace{\operatorname{E}\!\left[\operatorname{Var}(Y \mid F)\right]}_\text{Sharpness} + \underbrace{\operatorname{E}\!\left[\left(F - \operatorname{E}[Y \mid F]\right)^2\right]}_\text{Reliability}
\end{align*}

The Sharpness component measures the average width of the predicted probabilities.
It is defined as the mean squared difference between the predicted probabilities and the climatological probabilities.
The climatological probabilities are the long-term average of the actual outcomes.
The Sharpness component is a measure of the informativeness of the predictions.

The Reliability component measures the calibration of the predicted probabilities.
It is defined as the mean squared difference between the predicted probabilities and the empirical frequencies of the actual outcomes.
The empirical frequencies are the observed proportions of the actual outcomes.
The Reliability component is a measure of the accuracy of the predictions.

Sharpness can be further decomposed into Resolution \& Uncertainty components using the law of total variance.

\begin{align*}
  \text{Sharpness} &= \operatorname{E}\!\left[\operatorname{Var}(Y \mid F)\right] \\
  &= \underbrace{\operatorname{Var}(Y)}_\text{Uncertainty} - \underbrace{\operatorname{Var}\!\left(\operatorname{E}[Y \mid F]\right)}_\text{Resolution}
\end{align*}

Uncertainty measures the variance of the climatological probabilities.
Resolution measures the conditional variance of the actual outcomes given the predicted probabilities.

Together, Uncertainty, Resolution and Reliability form the URR decomposition of the Brier score~\cite{Murphy_JApplMeteorolClimatol_1973,Mitchell_UniversityofExeter_2019}.

\subsection{Refinement, Discrimination, Correctness}

An alternative decomposition of the Brier score can be obtained by conditioning on the actual outcomes instead of the predicted probabilities.
First, by conditioning on the actual outcomes $Y$, we can decompose the Brier score into Excess \& Correctness components as

\begin{align*}
  \mathrm{BS} &= \operatorname{E}\!\left[\left(F - Y\right)^2\right] \\
  &= \underbrace{\operatorname{E}\!\left[\operatorname{Var}(F \mid Y)\right]}_\text{Excess} + \underbrace{\operatorname{E}\!\left[\left(\operatorname{E}[F \mid Y] - Y\right)^2\right]}_\text{Correctness}.
\end{align*}

Excess measures the average width of the predicted probabilities, conditional on the actual outcomes, and it can be further decomposed into Refinement \& Discrimination components using the law of total variance as

\begin{align*}
  \text{Excess} &= \operatorname{E}\!\left[\operatorname{Var}(F \mid Y)\right] \\
  &= \underbrace{\operatorname{Var}(F)}_\text{Refinement} - \underbrace{\operatorname{Var}\!\left(\operatorname{E}[F \mid Y]\right)}_\text{Discrimination}.
\end{align*}

This results in the Refinement, Discrimination, Correctness decomposition of the Brier score~\cite{Murphy_MonWeatherRev_1987,Mitchell_UniversityofExeter_2019}.
Refinement measures the variance of the probabilities.
Discrimination measures the conditional variance of the predicted probabilities given the actual outcomes.
Correctness measures the squared difference between the true outcomes and the predicted probabilities, conditional on the actual outcomes.

\subsection{Yates Covariance Decomposition}

Starting from the bias--variance decomposition of the Brier score,

\[
  \mathrm{BS} = \underbrace{\operatorname{Var}(F - Y)}_\text{variance term} + \underbrace{(\mu_F - \mu_Y)^2}_\text{squared bias term},
\]

where $\mu_F = \operatorname{E}[F]$ and $\mu_Y = \operatorname{E}[Y]$, one can further expand the variance term using $\operatorname{Var}(F-Y) = \operatorname{Var}(F) + \operatorname{Var}(Y) - 2\Cov(F,Y)$ to obtain

\begin{align*}
  \mathrm{BS} &= \operatorname{E}\!\left[(F - Y)^2\right] \\
  &= \operatorname{Var}(F - Y) + (\mu_F - \mu_Y)^2 \\
  &= \sigma_F^2 + \sigma_Y^2 - 2\sigma_{FY} + (\mu_F - \mu_Y)^2,
\end{align*}

where $\sigma_F^2 = \operatorname{Var}(F)$ is the variance of the predicted probabilities, $\sigma_Y^2 = \operatorname{Var}(Y)$ is the variance of the outcomes, $\sigma_{FY} = \Cov(F, Y)$ is the covariance between the predicted probabilities and the outcomes, and $(\mu_F - \mu_Y)^2$ is the squared difference between the mean predicted probability and the mean outcome, often referred to as calibration-in-the-large.
This is known as the Yates (covariance) decomposition of the Brier score~\cite{Yates_OrganizationalBehaviorandHumanPerformance_1982}.

The optimal forecast is the one that minimizes the Brier score.
However, it is not the case that minimizing each component individually minimizes the Brier score itself.
This can be seen in $\sigma_F^2$, where the optimal forecast necessarily does not minimize this term.
This was previously noted by Yates himself, who stated~\cite{Yates_OrganizationalBehaviorandHumanPerformance_1982} that
\begin{quote}
  \textit{
    the aim of the forecaster should be to minimize the variance of his or her forecasts, $S_f^2$.
    There is an obvious qualification on this advice, however.
    The only way $S_f^2$ can take on its absolute minimum possible value of zero is when the forecaster offers constant forecasts.
    This strategy would make the covariance term zero, too.
    So the proper objective of the forecaster should be to minimize $S_f^2$, \emph{given} that he or she exercises his or her fundamental forecasting abilities, as represented by $S_{fd}$.
  }
\end{quote}
We believe that the conventional presentation of the decomposition does not allow for an intuitive communication of why this is the case, and we propose a new decomposition arrangement that makes this more explicit.

\section{Alternative arrangement of the covariance decomposition}

\begin{theorem}[Alternative Yates Decomposition]\label{thm:main}
  The Yates covariance decomposition of the Brier score can be rearranged as
  \begin{align*}
    \mathrm{BS} &= \underbrace{(\sigma_F - \sigma_Y)^2}_{\text{variance mismatch}} \\
    &\quad+ \underbrace{2 (\sigma_F \sigma_Y - \sigma_{FY})}_{\text{covariance deficit}} \\
    &\quad+ \underbrace{(\mu_F - \mu_Y)^2}_{\text{calibration-in-the-large}}.
  \end{align*}
\end{theorem}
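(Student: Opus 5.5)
We start from the Yates covariance decomposition stated above,
\[
  \mathrm{BS} = \sigma_F^2 + \sigma_Y^2 - 2\sigma_{FY} + (\mu_F - \mu_Y)^2,
\]
which itself comes from the bias--variance identity $\operatorname{E}[(F-Y)^2] = \operatorname{Var}(F-Y) + (\mu_F-\mu_Y)^2$ and bilinearity of covariance. We take this as given. The calibration-in-the-large term already appears verbatim, so it remains only to rewrite the first three terms.

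The key step is to add and subtract $2\sigma_F\sigma_Y$, where $\sigma_F=\sqrt{\operatorname{Var}(F)}$ and $\sigma_Y=\sqrt{\operatorname{Var}(Y)}$ are the (nonnegative) standard deviations:
\[
  \sigma_F^2 + \sigma_Y^2 - 2\sigma_{FY}
  = \bigl(\sigma_F^2 - 2\sigma_F\sigma_Y + \sigma_Y^2\bigr) + 2\bigl(\sigma_F\sigma_Y - \sigma_{FY}\bigr).
\]
Recognizing the first bracket as the perfect square $(\sigma_F-\sigma_Y)^2$ gives exactly the variance mismatch and covariance deficit terms of \cref{thm:main}. This completes the identity. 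Only finite second moments are needed, which hold automatically since $F,Y\in[0,1]$.

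The statement itself is a pure identity, so there is no real obstacle. The substantive content advertised in the abstract is that each term is independently non-negative, and we would verify this right after. The variance mismatch and calibration terms are squares. The covariance deficit is non-negative by Cauchy--Schwarz, $\sigma_{FY}\le|\sigma_{FY}|\le\sigma_F\sigma_Y$.

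The only delicate point is the degenerate case where $\sigma_F=0$ or $\sigma_Y=0$. There the correlation is undefined, but the deficit is still well defined: it equals $0$ because $\sigma_{FY}=0$. When both standard deviations are positive we can write the deficit as $2\sigma_F\sigma_Y(1-\Corr(F,Y))$, which justifies the phrase ``correlation deficit.'' Equality in Cauchy--Schwarz then characterizes when this term vanishes, namely $F-\mu_F = c\,(Y-\mu_Y)$ almost surely with $c\ge 0$.
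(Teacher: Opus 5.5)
Your proof is correct and uses the same algebraic identity as the paper. The paper expands $(\sigma_F-\sigma_Y)^2 + 2(\sigma_F\sigma_Y-\sigma_{FY})$ to recover $\sigma_F^2+\sigma_Y^2-2\sigma_{FY}$, and you run that computation in reverse by adding and subtracting $2\sigma_F\sigma_Y$; your remarks on Cauchy--Schwarz non-negativity and the degenerate case go beyond this statement and match the paper's subsequent corollaries.
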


\begin{proof}
  Expanding the first two terms of the right-hand side:
  \begin{multline*}
    (\sigma_F - \sigma_Y)^2 + 2(\sigma_F \sigma_Y - \sigma_{FY}) \\
    = \sigma_F^2 - 2\sigma_F \sigma_Y + \sigma_Y^2 + 2\sigma_F \sigma_Y - 2\sigma_{FY} \\
    = \sigma_F^2 + \sigma_Y^2 - 2\sigma_{FY},
  \end{multline*}
  which, together with the calibration-in-the-large term $(\mu_F - \mu_Y)^2$, recovers the original Yates decomposition.
\end{proof}

\begin{corollary}[Non-negativity]\label{cor:nonneg}
  All three terms in \cref{thm:main} are non-negative.
\end{corollary}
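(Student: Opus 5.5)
The plan is to check the three terms of \cref{thm:main} one at a time. Two of them are squares of real numbers and need no argument, so all the content sits in the covariance deficit. The variance mismatch $(\sigma_F - \sigma_Y)^2$ is the square of a real number, because $\sigma_F = \sqrt{\operatorname{Var}(F)}$ and $\sigma_Y = \sqrt{\operatorname{Var}(Y)}$ are real. The calibration-in-the-large term $(\mu_F - \mu_Y)^2$ is likewise a square of a real number. Both are therefore non-negative immediately. Finiteness of all moments is automatic, since $F \in [0,1]$ and $Y \in \{0,1\}$ are bounded.

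For the covariance deficit $2(\sigma_F\sigma_Y - \sigma_{FY})$, it suffices to show $\sigma_{FY} \le \sigma_F \sigma_Y$. I will invoke the Cauchy--Schwarz inequality for the centred variables $F - \mu_F$ and $Y - \mu_Y$:
\[
  |\sigma_{FY}| = \bigl|\operatorname{E}[(F-\mu_F)(Y-\mu_Y)]\bigr| \le \sqrt{\operatorname{E}[(F-\mu_F)^2]\,\operatorname{E}[(Y-\mu_Y)^2]} = \sigma_F\sigma_Y .
\]
Then $\sigma_{FY} \le |\sigma_{FY}| \le \sigma_F\sigma_Y$, which gives non-negativity.

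Degenerate cases need no special handling. If either standard deviation is zero, the corresponding centred variable vanishes almost surely, so $\sigma_{FY}=0$ and the term is zero. Cauchy--Schwarz covers this case anyway. When both standard deviations are positive, one can write the term as $2\sigma_F\sigma_Y(1 - \Corr(F,Y))$ and note that $\Corr(F,Y) \le 1$. This also exposes the ``correlation deficit'' reading used in the abstract.

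There is no real obstacle here; the only step with content is the Cauchy--Schwarz bound. I would also record the equality cases, since the paper will want them for the optimality discussion. Equality in the deficit term holds exactly when $F - \mu_F$ is a non-negative multiple of $Y - \mu_Y$ almost surely, or when one of the two variables is degenerate.
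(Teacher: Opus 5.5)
Your proof is correct and follows the paper's own argument. The variance-mismatch and calibration-in-the-large terms are squares, and the covariance deficit is non-negative by Cauchy--Schwarz applied as $|\sigma_{FY}| \le \sigma_F\sigma_Y$; your extra remarks on degenerate cases and equality conditions are accurate and go slightly beyond what the paper writes down.
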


\begin{proof}
  The first and third terms are squares and hence non-negative.
  For the second term, the Cauchy--Schwarz inequality gives $|\sigma_{FY}| \leq \sigma_F \sigma_Y$, so $2(\sigma_F \sigma_Y - \sigma_{FY}) \geq 0$.
\end{proof}

\begin{corollary}[Optimality conditions]\label{cor:optimal}
  $\mathrm{BS} = 0$ if and only if $F = Y$ almost surely.
  Equivalently, each of the following conditions must hold simultaneously:
  \begin{enumerate}
    \item $\sigma_F = \sigma_Y$ \emph{(variance matching)},
    \item $\sigma_{FY} = \sigma_F \sigma_Y$ \emph{(perfect positive correlation)},
    \item $\mu_F = \mu_Y$ \emph{(no bias)}.
  \end{enumerate}
\end{corollary}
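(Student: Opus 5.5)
The plan is to combine \cref{thm:main} with \cref{cor:nonneg}. By \cref{thm:main}, $\mathrm{BS}$ is the sum of three terms, and by \cref{cor:nonneg} each is non-negative. A sum of non-negative reals is zero exactly when every summand is zero. So $\mathrm{BS}=0$ holds if and only if $(\sigma_F-\sigma_Y)^2=0$, $\sigma_F\sigma_Y-\sigma_{FY}=0$ and $(\mu_F-\mu_Y)^2=0$ hold simultaneously, and these are conditions 1--3. This gives the ``equivalently'' half of the statement with essentially no work beyond the theorem.

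For the first claim, that $\mathrm{BS}=0$ if and only if $F=Y$ almost surely, I would argue directly from the definition $\mathrm{BS}=\operatorname{E}[(F-Y)^2]$. If $F=Y$ a.s., then $(F-Y)^2=0$ a.s., so its expectation is zero. Conversely, $(F-Y)^2$ is a non-negative random variable with zero expectation, so it vanishes almost surely, and hence $F=Y$ a.s.

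To tie the two halves together, I would also check directly that conditions 1--3 imply $F=Y$ a.s. Under them, $\operatorname{Var}(F-Y)=\sigma_F^2+\sigma_Y^2-2\sigma_{FY}=(\sigma_F-\sigma_Y)^2+2(\sigma_F\sigma_Y-\sigma_{FY})=0$, so $F-Y$ is a.s. constant. Since $\mu_F=\mu_Y$, that constant is zero. In the degenerate case $\sigma_Y=0$, condition 1 forces $\sigma_F=0$, so both variables are a.s. constant and condition 3 makes them equal. This route also shows why condition 2 is stated as $\sigma_{FY}=\sigma_F\sigma_Y$ rather than in terms of a correlation: the correlation is undefined when a variance vanishes.

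The main obstacle is only interpretive, namely the case where $\sigma_F$ or $\sigma_Y$ is zero and ``perfect positive correlation'' has no literal meaning. I would handle it by reading condition 2 literally as the covariance identity, which is exactly what the decomposition requires, so no separate case analysis is needed. The rest is immediate from \cref{thm:main,cor:nonneg}.
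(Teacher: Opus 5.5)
Your proposal is correct and follows the paper's route: \cref{thm:main} together with \cref{cor:nonneg} reduces $\mathrm{BS}=0$ to the simultaneous vanishing of the three terms, which are exactly conditions 1--3. You also prove the link to $F=Y$ almost surely, via $\operatorname{E}[(F-Y)^2]=0$ or via $\operatorname{Var}(F-Y)=0$ together with $\mu_F=\mu_Y$; the paper's proof only asserts this equivalence in its final sentence.
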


\begin{proof}
  By \cref{cor:nonneg}, all three terms are non-negative, so $\mathrm{BS} = 0$ if and only if each term vanishes individually.
  Term~1 vanishes iff $\sigma_F = \sigma_Y$;
  term~2 vanishes iff $\sigma_{FY} = \sigma_F \sigma_Y$ (i.e., $\rho_{FY} = 1$ when the standard deviations are non-zero);
  term~3 vanishes iff $\mu_F = \mu_Y$.
  These three conditions together are equivalent to $F = Y$ almost surely.
\end{proof}

\begin{remark}
  When $\sigma_F$ and $\sigma_Y$ are both non-zero, the covariance deficit term can be rewritten in terms of the Pearson correlation $\rho_{FY} = \sigma_{FY} / (\sigma_F \sigma_Y) = \Corr(F, Y)$ as
  \[
    2 (\sigma_F \sigma_Y - \sigma_{FY}) = 2 \sigma_F \sigma_Y (1 - \rho_{FY}),
  \]
  which is minimized when $\rho_{FY} = 1$.
  This makes explicit the point noted by \textcite{Yates_OrganizationalBehaviorandHumanPerformance_1982}: a forecaster should not simply minimize $\sigma_F^2$, but rather match the variability of forecasts to that of outcomes while maintaining maximal correlation.
\end{remark}

Under the notation of \textcite{Mitchell_UniversityofExeter_2019}, this means that under the optimality conditions, the Refinement term $\sigma_F^2$ is equal to the Uncertainty term $\sigma_Y^2$, a well known result.

\section{Conclusion}

We have shown that a simple algebraic rearrangement of the Yates covariance decomposition yields three independently non-negative terms, each with a clear interpretation: variance mismatch, covariance deficit, and calibration-in-the-large.
The non-negativity of the covariance deficit follows directly from the Cauchy--Schwarz inequality.
This decomposition makes the optimality conditions for the Brier score transparent: a perfect forecast must simultaneously match the variance of outcomes, achieve perfect positive correlation ($\rho_{FY} = 1$), and exhibit no bias ($\mu_F = \mu_Y$).
In particular, the rearrangement resolves the interpretive difficulty noted by \textcite{Yates_OrganizationalBehaviorandHumanPerformance_1982} regarding the role of forecast variance $\sigma_F^2$ in the original decomposition, since the variance mismatch term $(\sigma_F - \sigma_Y)^2$ is clearly minimized not by minimizing the variability of forecasts but by matching it to that of outcomes.

\printbibliography

\end{document}